\documentclass[12pt, twoside, letterpaper]{amsart}

\usepackage{amsmath, hyperref, color, amssymb}
\usepackage{srcltx}
\usepackage[normalem]{ulem}


\usepackage{geometry}
\geometry{a4paper,hcentering,vcentering,outer=2.5cm,top=3cm}

\usepackage{amsthm} \usepackage{amsmath} \usepackage{amsfonts}
\usepackage{amssymb} \usepackage{latexsym} \usepackage{enumerate}
\usepackage[numbers]{natbib}


\theoremstyle{plain}%
\newtheorem{Theorem}{Theorem}[section] %

\newtheorem{Proposition}[Theorem]{Proposition} %

\theoremstyle{definition}%

\newtheorem{definition}[Theorem]{Definition}
\newtheorem{interpretation}[Theorem]{Interpretation}%
\newtheorem{Example}[Theorem]{Example}
\newtheorem{remark}[Theorem]{Remark} %

\theoremstyle{remark}%
\newtheorem{Remark}[Theorem]{Remark} %


 %

\newcommand{\Cov}[3]{\cov_{#1}\left[#2,#3\right]}
\newcommand{\Var}[2]{\var_{#1}
\left[#2\right]}


\DeclareMathOperator{\cov}{Cov}
\DeclareMathOperator{\var}{Var}


\newcommand{\envspace}{\vspace{2mm}}


\def\E{\mathbb{E}}
\def\F{\mathbb{F}}

\def\P{\mathbb{P}}

\def\R{\mathbb{R}}




\def\e{\varepsilon}


\def\vth{\vartheta}



\def\w{\omega}




\def\cF{\mathcal{F}}















\usepackage{subcaption}
\usepackage{tikz}
\usepackage{latexsym, amssymb, bbm}
\usepackage{amsmath}
\usepackage{mathrsfs}
\usepackage{amsthm}
\usepackage{verbatim}
\usepackage{fancyhdr}
\usepackage{tikz}
\usetikzlibrary{arrows}
\usepackage{caption}
\captionsetup[figure]{labelformat=empty}
\usepackage{forest}
\usepackage{graphicx}
\usepackage{stmaryrd}
\usepackage{wasysym}

\begin{document}
\title
[Stability and asymptotic analysis of the F\"ollmer-Schweizer decomposition]{Stability and asymptotic analysis of the F\"ollmer-Schweizer decomposition on a finite probability space}
\author{Sarah Boese}  
\address{Sarah Boese,  Vassar College, Poughkeepsie, NY 12604, United States}
\email{sboese@vassar.edu}

\author{Tracy Cui} 
\address{Tracy Cui, Carnegie Mellon University, Pittsburgh, PA 15213, United States}
\email{txxcui@gmail.com}

\author{Samuel Johnston} 
\address{Samuel Johnston, Willamette University, Salem, OR 97301, United States}
\email{sdjohnston@willamette.edu}

\author{Gianmarco Molino}
\address{Gianmarco Molino, Department of Mathematics, University of Connecticut, Storrs, CT 06269, United States}
\email{gianmarco.molino@uconn.edu}

 \author{Oleksii Mostovyi}
\address{Oleskii Mostovyi, Department of Mathematics, University of Connecticut, Storrs, CT 06269, United States}
\email{oleksii.mostovyi@uconn.edu}
\thanks{This paper is a part of an REU project conduced in  Summer 2019 at the University of Connecticut. The project has been supported by the National Science Foundation under grants  No. DMS-1659643 and  DMS-1848339.}
\keywords{F\"ollmer-Schweizer decomposition, simultaneous perturbations of the drift and volatility, asymptotic analysis, stability}
\subjclass[2010]{60G07, 93E20, 91G10, 91G20, 90C31}
\begin{abstract}
    First, we consider the problem of hedging in complete binomial models. Using the discrete-time F\"ollmer-Schweizer decomposition, we demonstrate the equivalence of the backward induction and sequential regression approaches. Second, in incomplete trinomial models, we examine the extension of the sequential regression approach for approximation of contingent claims. Then, on a finite probability space, we investigate stability of the discrete-time F\"ollmer-Schweizer decomposition with respect to  perturbations of the stock price dynamics and, finally, perform its asymptotic analysis under simultaneous perturbations of the drift and volatility of the underlying discounted stock price process, where we prove stability and obtain explicit formulas for the leading order correction terms. 
\end{abstract}

\maketitle

\section{Introduction}
In practice, financial models are not exact -- as in any field, modeling based on real data introduces some degree of error. Therefore, it is important to understand the effect error has on the calculations and assumptions we make on the model. In this paper, we focus on the stability and asymptotic analysis of the F\"ollmer-Schweizer decomposition, as among the pricing and hedging approaches, in incomplete markets, it gives the best approximation of a given contingent claim in the sense of the least-squares error, which is one of the most natural criteria used in practice. Further, in complete models, the F\"ollmer-Schweizer decomposition is consistent with the backward induction, which is another canonical method in financial mathematics.

Steven E. Shreve's \textit{Stochastic Calculus for Finance I: The Binomial Asset Pricing Model} \cite{shreve_stochastic_2012} introduces option pricing in a highly accessible manner. The text predominantly focuses on the binomial model, and in this paper, we go beyond it, as there are many models used in practice that are not binomial. As the most natural extension of a binomial model is a trinomial one, below, we also give it special consideration. Note that both binomial and trinomial models, despite their simplicity, are widely used in approximations of pricing and hedging in more advanced models, including the continuous-time ones, see, e.g., \cite{BrigoMercurio}.

In this paper, we extend the introduction to asset hedging given by Shreve to the strategy of sequential regression, keeping the discrete-time framework but allowing for consideration of other market models, including incomplete ones. In the complete case, we show that the strategy of sequential regression introduced by F\"ollmer and Schweizer \cite{follmer_hedging_1988} is equivalent to Shreve's recursive hedging formula. We then extend our discussion to the incomplete trinomial model, after which we show small market perturbations have a little effect, which we quantify, on hedging strategies and option pricing, and derive formulas for correction factors. 

The remainder of this paper is organized as follows. In Section \ref{secProblem}, we formulate the minimization problem and define the F\"olmer-Schweizer decomposition. Section \ref{secCompleteMarkets} contains its investigation in complete binomial markets, where we also prove the equivalence of the approach based on the F\"olmer-Schweizer decomposition to the backward induction. Section \ref{secIncompleteMarkets}   presents the discussion of the general incomplete case. In Section \ref{secStability}, we revisit the stability question in the context of perturbations of the model parameters, where we introduce a parametrization of perturbations that allows for simultaneous distortions of its drift and volatility of the underlying stock price process. We prove the stability of the F\"olmer-Schweizer decomposition under such perturbations. Finally, in Section \ref{secAsymptotics}, we obtain explicit formulas for the first-order correction terms of each component of the  decomposition under such perturbations, including the correction to the optimal trading strategy.

\section{The discrete-time F\"ollmer-Schweizer decomposition}\label{secProblem}
Let $(\Omega,\P)$ be a finite probability space, $N$ a fixed positive integer and $\F=(\mathcal F_n)_{n=0,1,\dots,N}$ a filtration, i.e., an increasing family of sub-algebras, each containing $\Omega$ and $\emptyset$. Assume that $\cF_0$ is trivial and $\mathcal F_N$ contains all subsets of $\Omega$. As we work on a finite probability space, without loss of generality, we suppose that $\mathbb P[\w]>0$ for every $\w\in\Omega$.  We suppose that there is a bank account, which we will use as a num\'eraire, and in particular, its price process equals to $1$ at all times.   Let $S=(S_n)_{n=0,1,...,N}$ be a real-valued, $\F$-adapted process, i.e., each $S_n$ is $\mathcal F_n$-measurable.  $S$ describes the discounted price process of a stock. We denote 
\begin{displaymath}
    \Delta S_n:=S_n-S_{n-1} ,\quad \text{for } n=1,...,N
\end{displaymath}
the increments of $S$. We call a process $\vartheta=(\vartheta_n)_{n=1,...,N}$ \textit{predictable} if $\vartheta_n$ is $\cF_{n-1}$-measurable for each $n$. Let $\Theta$ be the set of all predictable processes $\vartheta$, that financially correspond to self-financing trading strategies, in view of the presence of money market account, which we use as a num\'eraire. 

\begin{definition} For $\vartheta \in \Theta$, we define the process $G(\vartheta)$ as follows: $G(\vartheta)$ as 
\begin{displaymath}
    G_n(\vartheta):= \sum^n_{j=1}\vartheta_{j}\Delta S_j \hspace{0.5in} \text{for } n=0,1,...,N.
\end{displaymath} 
\end{definition}

For a given a random variable  $V_N$ and $c \in \mathbb R$, one can consider the  following problem posed in \cite{schweizer_variance-optimal_1995}  
\begin{equation}\label{problem}
\begin{split}
    & \text{minimize } \E\left[\left(V_N-c-G_N(\vartheta)  \right)^2\right] \ \text{ over all } \vartheta \in \Theta ~\text{and}~c\in\mathbb R.   
    \end{split} 
\end{equation}

\begin{interpretation}
As we view $S_n$ as the price at time $n$ of a risky financial asset, the process $\vartheta$ describes the trading strategy of some investor in the market, where $\vartheta_n$ is the number of shares of stock held between the times $n$ and $n+1$. Process $G(\vartheta)$ becomes the \textit{gains from the trade} process. We now interpret $V_N$ as a \textit{nontraded security} measured in the units of the bank account with maturity $N$ and $c$ as the \textit{initial capital}. Thus problem \eqref{problem} can be interpreted as finding a self-financing trading strategy that gives the best least-squares approximation of $V_N$. Mathematically, \eqref{problem} is also closely related to the problem considered in \cite{SchweizerApprox}, finding the best approximation of a random variable by a stochastic integral (plus a constant). Quadratic optimization problems of the form \eqref{problem} also appear in the asymptotic analysis of stochastic control problems with respect to perturbations of the initial data, where they govern the second-order correction terms, see \cite{KS2006}, \cite{KS2006b}, \cite{MostovyiSirbuModel}, and \cite{MostovyiNumeraire} for details. 
\end{interpretation}

A solution \eqref{problem}, given in terms of an explicit formula for an optimal trading strategy $\widehat\vartheta$, is known as \textit{sequential regression} and is shown in \cite{follmer_hedging_1988}. 
For a general probability space, such a solution is subject to additional conditions on $S$ and is closely related to the discrete-time \textit{F\"ollmer-Schweizer decomposition}, defined below. 

\begin{definition} 
We use the definition of the \textit{nondegeneracy condition} (ND) as given in \cite{schweizer_variance-optimal_1995}, that is, $S$ satisfies (ND) if there exists a constant $\delta \in (0,1)$ such that 
\begin{align*}
    \left(\E[\Delta S_n \mid \cF_{n-1}]\right)^2 \leq \delta \E\left[\Delta S_n^2 \mid \cF_{n-1}\right] \hspace{0.5in} \P\text{-a.s. for } n=1,\dots,N.
\end{align*}
\end{definition}

\begin{remark}
We note that on finite probability spaces, (ND)  holds in non-trivial (or rather non-degenerate) cases.
\end{remark}
\begin{definition}\label{defFS}
We now introduce the \textit{discrete F\"ollmer-Schweizer decomposition}, following \cite{schweizer_variance-optimal_1995}. Let $$S=M+A,$$ be the semimartingale decomposition of $S$ into a martingale $M$ and a predictable process $A$. Random variable $V_N$ admits the \textit{discrete F\"ollmer-Schweizer decomposition} if it can be written as
\begin{equation} 
    V_N = V_0 + \sum_{j=1}^N\widehat \vartheta_j \Delta S_j+L_N.
    \label{decomp}
\end{equation} 
for some $V_0 \in \R$, a process $\widehat \vartheta\in \Theta$, and a $\mathbb P$-martingale $L$, such that
\begin{enumerate}
    \item $L$ and $M$ are orthogonal, i.e., 
    $LM$ is a $\mathbb P$-martingale,  
    \text{and}
    \item $\E[L_0]=0$.
\end{enumerate}
Note that when $\cF_0$ is trivial, the latter condition reads $L_0=0$.
\end{definition}
%
Using the sequential regression approach, following \cite{follmer_hedging_1988}, we obtain the following formula for an optimal hedging strategy $\widehat \vartheta$:

\begin{equation}\label{sequential}
    \widehat \vartheta_{n} := \frac{
    \Cov{\cF_{n-1}}{ V_N - \sum\limits_{j = n + 1}^{N}\widehat \vartheta_j\Delta S_j}{ \Delta S_{n} }
    }{\Var{\cF_{n-1}}{\Delta S_{n} }},\quad n=1,\dots,N,
\end{equation}
where $\Cov{\cF_{n-1}}{\cdot}{\cdot}$ and $\Var{\cF_{n-1}}{\cdot}$ denote the conditional covariance and variance, respectively.
This demonstrates the richness of the FS-decomposition as an analytic tool. With very limited assumptions about  $V_N$, we are able to obtain an explicit formula for the optimal (in the sense of \eqref{problem}) hedging strategy. Furthermore, this hedging formula holds in both complete and incomplete markets, which are discussed in the following sections.

\section{Complete Markets}\label{secCompleteMarkets}
In the settings of the previous section, when the market is complete, every contingent claim  $V_N$ can be represented as
\begin{displaymath}
    V_N=V_0+\sum^{N}_{j=1}\vartheta_{j}\Delta S_{j},
\end{displaymath}
for some $\vartheta \in \Theta$ and $V_0\in\mathbb R$.
Note that this situation corresponds to $L=0$ in  Definition \ref{defFS}. Put differently, $V_N$ is now can be obtained by trading between the money market account and the stock.
\subsection{Binomial Asset Pricing Model}
We now introduce a simple framework for the problem following \cite{shreve_stochastic_2012}, however directly  using the bank account as a num\'eraire. Consider a \textit{binomial asset pricing model}, where at each time step $k$, $S_{k+1}$ can take one of two values: $uS_k$ or $dS_k$, with probabilities $p\in(0,1)$ and $q := 1-p$, and where $u>1$ and $d\in (0,1)$, known as the \textit{up factor} and \textit{down factor}, respectively, are fixed positive constants with $u>d$. The value at each time $k$ is determined by a (not necessarily fair) coin flip $\omega_k$, which can take either the value $H$ or $T$ and is independent of other coin tosses. 
Let $\F = (\mathcal F_n)_{n = 0,\dots,N}$ be the filtration, where each $\mathcal F_n$ contains information about the first $n$ coin tosses. 
  An example of a 2-period binomial model is shown in the figure below. 

\begin{figure}[h]
    \centering
    \begin{tikzpicture}[line cap=round,line join=round,>=triangle 45,x=2cm,y=2cm]
    \draw [line width=0.5pt] (-3.4641016151377544,2)-- (-2.598076211353316,2.5);
    \draw [line width=0.5pt] (-3.4641016151377544,2)-- (-2.598076211353316,1.5);
    \draw [line width=0.5pt] (-2.598076211353316,2.5)-- (-1.7320508075688772,3);
    \draw [line width=0.5pt] (-2.598076211353316,2.5)-- (-1.7320508075688772,2);
    \draw [line width=0.5pt] (-2.598076211353316,1.5)-- (-1.7320508075688772,2);
    \draw [line width=0.5pt] (-2.598076211353316,1.5)-- (-1.7320508075688772,1);
    \begin{scriptsize}
    \draw [fill=black] (-3.4641016151377544,2) circle (1.5pt);
    \draw[color=black] (-3.4802097299192036,2.2641580080798347)  [yshift=-5pt] node {$S$};
    \draw [fill=black] (-2.598076211353316,2.5) circle (1.5pt);
    \draw[color=black] (-2.598250447005697,2.762205603136873)  [yshift=-5pt] node {$uS$};
    \draw [fill=black] (-2.598076211353316,1.5) circle (1.5pt);
    \draw[color=black] (-2.598250447005697,1.7557344214591075)  [yshift=-5pt] node {$dS$};
    \draw [fill=black] (-1.7320508075688772,3) circle (1.5pt);
    \draw[color=black] (-1.7059151725285018,3.2498772066302233)  [yshift=-5pt] node {$u^2S$};
    \draw [fill=black] (-1.7320508075688772,2) circle (1.5pt);
    \draw[color=black] (-1.7266671556558786,2.2641580080798347)  [yshift=-5pt] node {$S$};
    \draw [fill=black] (-1.7320508075688772,1) circle (1.5pt);
    \draw[color=black] (-1.6851631894011252,1.2784388095294457)  [yshift=-5pt] node {$d^2S$};
    \end{scriptsize}
    \end{tikzpicture}
    \caption{Example of a 2-Period Binomial Model}
    \label{binomialfig}
\end{figure}

\begin{remark}
        For  figure \ref{binomialfig}, for the binomial model, we considered the case when  $d=\frac{1}{u}$. That is, after an even number of time steps, the stock price returns to its original value if we flip exactly the same number of heads and tails. However, in general, it is not necessarily the case that $u = \frac{1}{d}$.
    \end{remark}

\vspace{0.1in}

Let $X$ be the replicating wealth process for the contingent claim $V_N$, i.e., a self-financing process starting from the initial wealth $X_0$, and such that $$X_N = V_N.$$ Classical backward induction approach, for which we refer to \cite{shreve_stochastic_2012}, assures that the number of shares of stock in the replicating portfolio for $V_N$ held between times $n$ and $n+1$, can be obtained via the formula:
\begin{equation}\label{deltaHedging}
\vartheta_{n+1}(\w)=
\frac{X_{n+1}(\omega H) - X_{n+1}(\omega T)}{S_{n+1}(\omega H) - S_{n+1}(\omega T)},
\end{equation}
where $\omega = \omega_1,\dots,\omega_n$. 
This result is known as the Delta-Hedging, see \cite[Theorem 1.2.2, p. 12]{shreve_stochastic_2012}. 

The main result of this section is proving that in complete binomial settings the Delta-Hedging rule gives the same strategy as the F\"olmer-Schweizer decomposition.

\begin{Proposition}\label{propBinomial}
In complete binomial settings, the formulas \eqref{sequential} and \eqref{deltaHedging} are equivalent.
\end{Proposition}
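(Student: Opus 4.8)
The plan is to show that both formulas collapse to the same difference quotient, exploiting the fact that in a binomial model the increment $\Delta S_n$ takes only two values once $\cF_{n-1}$ is known. Since the market is complete, the decomposition of Definition \ref{defFS} has $L=0$, so $V_N = V_0 + \sum_{j=1}^N \widehat\vartheta_j \Delta S_j$. First I would introduce the replicating wealth process $X_n := V_0 + \sum_{j=1}^n \widehat\vartheta_j \Delta S_j$, which is $\cF_n$-measurable and satisfies $X_N = V_N$, matching the $X$ appearing in \eqref{deltaHedging}.

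The first key step is a telescoping identity that simplifies the regression residual in \eqref{sequential}: because $L=0$,
\[
V_N - \sum_{j=n+1}^N \widehat\vartheta_j \Delta S_j = V_0 + \sum_{j=1}^n \widehat\vartheta_j \Delta S_j = X_n .
\]
Hence the numerator of \eqref{sequential} is exactly $\Cov{\cF_{n-1}}{X_n}{\Delta S_n}$, and the sequential coefficient $\widehat\vartheta_n$ is the conditional regression slope of $X_n$ on $\Delta S_n$ given $\cF_{n-1}$.

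Next I would record the elementary regression identity for a two-valued conditional law. Conditionally on $\cF_{n-1}$, both $X_n$ and $\Delta S_n$ are functions of the single coin toss $\omega_n\in\{H,T\}$; writing $\omega=\omega_1\dots\omega_{n-1}$, they take the values at $H$ with probability $p$ and at $T$ with probability $q=1-p$. A direct expansion of the conditional moments gives
\[
\Cov{\cF_{n-1}}{X_n}{\Delta S_n} = pq\,\big(X_n(\omega H) - X_n(\omega T)\big)\big(\Delta S_n(\omega H) - \Delta S_n(\omega T)\big),
\]
\[
\Var{\cF_{n-1}}{\Delta S_n} = pq\,\big(\Delta S_n(\omega H) - \Delta S_n(\omega T)\big)^2 ,
\]
so that taking the ratio cancels the common factor $pq\,\big(\Delta S_n(\omega H) - \Delta S_n(\omega T)\big)$ and leaves the difference quotient $\big(X_n(\omega H) - X_n(\omega T)\big)/\big(\Delta S_n(\omega H) - \Delta S_n(\omega T)\big)$.

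Finally, since $S_{n-1}$ is $\cF_{n-1}$-measurable, one has $\Delta S_n(\omega H) - \Delta S_n(\omega T) = S_n(\omega H) - S_n(\omega T)$, so the ratio equals $\big(X_n(\omega H) - X_n(\omega T)\big)/\big(S_n(\omega H) - S_n(\omega T)\big)$, which is precisely \eqref{deltaHedging} after reconciling the index conventions $\widehat\vartheta_n$ versus $\vartheta_{n+1}$. I do not expect a deep obstacle, as the content is a clean computation; the step requiring the most care is the identification of the regression residual with the wealth process $X_n$, which is exactly what makes the two-valued regression identity reduce to the Delta-hedging difference quotient, with the remainder being index bookkeeping between the predictability conventions of the two formulas.
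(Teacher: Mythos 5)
Your proof is correct, but it attacks the identity from the opposite end than the paper does, so the comparison is worth recording. The paper starts from the Delta-hedging side: it takes the backward-induction wealth process $X$ of \eqref{deltaHedging}, uses only the self-financing relation $X_N=X_{N-1}+\vartheta_N\Delta S_N$ to get $X_N-\mathbb E_{\cF_{N-1}}[X_N]=\vartheta_N\left(\Delta S_N-\mathbb E_{\cF_{N-1}}[\Delta S_N]\right)$, reads off $\Cov{\cF_{N-1}}{V_N}{\Delta S_N}=\vartheta_N\Var{\cF_{N-1}}{\Delta S_N}$, and then inducts backward after rewriting the residual as $X_{N-1}$; it never needs the claim $L=0$ nor the explicit two-point conditional law. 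You instead start from the F\"ollmer--Schweizer side: you take $L=0$ for granted (the paper asserts this in Section \ref{secCompleteMarkets}, so you may cite it, but it is not proved there), telescope the residual in \eqref{sequential} to $X_n$, and evaluate the regression slope through the two-point identity $\Cov{\cF_{n-1}}{f}{g}=pq\,(f_H-f_T)(g_H-g_T)$. What your route buys is transparency about where binomiality enters -- the cancellation of $pq\left(\Delta S_n(\omega H)-\Delta S_n(\omega T)\right)$ is legitimate exactly because $p\in(0,1)$, $u>d$ and $S_{n-1}>0$. What it costs is two leaning points that each deserve an explicit sentence: first, identifying your $X_n=V_0+\sum_{j\leq n}\widehat\vartheta_j\Delta S_j$ with the wealth process appearing in \eqref{deltaHedging} requires the uniqueness of the replicating wealth process in the binomial model (true, since at each node the two branch equations determine the pair $(\vartheta_n,X_{n-1})$ uniquely when $u\neq d$); second, the telescoping step presupposes that the sequential-regression strategy replicates $V_N$, which is close in content to the statement being proved and is precisely what the paper's argument establishes rather than assumes. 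Neither point is an error, but as written your proof is conditional on them, whereas the paper's is self-contained.
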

\begin{proof}
Below $\widehat \vartheta$ will denote the strategy obtained from formula \eqref{sequential}, and $\vartheta$ will denote the replicating strategy given by \eqref{deltaHedging} for each $n$.  
First, for $n = N$,  \eqref{sequential} reads
\begin{equation}\label{sequentialN}
    \widehat \vartheta_{N} = \frac{
    \Cov{\cF_{N-1}}{ V_N}{ \Delta S_{N} }
    }{\Var{\cF_{N-1}}{\Delta S_{N} }}=\frac{
    \Cov{\cF_{N-1}}{ X_N}{ \Delta S_{N} }
    }{\Var{\cF_{N-1}}{\Delta S_{N} }},
\end{equation}
as $X_N = V_N$. Also, since\footnote{We use indices for $\vartheta$ in a consistent way with \cite{schweizer_variance-optimal_1995}, so that $\vartheta$ is predictable.} $$X_{n+1} = X_{n} + \vartheta_{n+1}(S_{n+1} - S_{n}), \quad n = 0,\dots,N-1,$$ 
we have
$$X_N - \mathbb E_{\cF_{N-1}}\left[X_N\right] = \vartheta_N \left(\Delta S_N - \mathbb E_{\cF_{N-1}}\left[ \Delta S_N\right] \right).$$
Therefore, we can rewrite $\Cov{\cF_{N-1}}{ V_N}{ \Delta S_{N} }$  as 
\begin{equation}\nonumber
\begin{split}
\Cov{\cF_{N-1}}{ V_N}{ \Delta S_{N} } &= 
\mathbb E_{\cF_{N-1}}\left[ (X_N - \mathbb E_{\cF_{N-1}}\left[X_N\right])(\Delta S_N - \mathbb E_{\cF_{N-1}}\left[ \Delta S_N \right])\right] \\
&= \vartheta_N
\mathbb E_{\cF_{N-1}}\left[ (\Delta S_{N} -  \mathbb E_{\cF_{N-1}}\left[\Delta S_{N} \right])(\Delta S_N - \mathbb E_{\cF_{N-1}}\left[ \Delta S_N \right])\right] \\
&= \vartheta_N
\Var{\cF_{N-1}}{\Delta S_{N}}.
\end{split}
\end{equation}
And thus $\vartheta_N = \frac{\Cov{\cF_{N-1}}{ V_N}{ \Delta S_{N} }}{\Var{\cF_{N-1}}{\Delta S_{N}}}$. Comparing $\vartheta_N$ to  $\widehat\vartheta_N$ from  \eqref{sequentialN}, we deduce that they coincide.  
This also implies that 
$$
X_{N-1} = X_N - \widehat\vartheta_N \Delta S_N.$$
Or equivalently, we have 
\begin{equation}\label{12251}X_N - \sum\limits_{j=N}^N\widehat\vartheta_N \Delta S_N = X_{N-1}.\end{equation}
The latter expression, however, is exactly the term appearing in $\widehat \vartheta_{N-1}$ in \eqref{sequential}, which via \eqref{12251}, we can rewrite as
$$\widehat\vartheta_{N-1} = \frac{\Cov{\cF_{N-2}}{X_N - \sum\limits_{j=N}^N\widehat\vartheta_N \Delta S_{N-1}}{\Delta S_{N-1}}}{\Var{\cF_{N-2}}{\Delta S_{N-1}}} = \frac{\Cov{\cF_{N-2}}{X_{N-1}}{\Delta S_{N-1}}}{\Var{\cF_{N-2}}{\Delta S_{N-1}}},$$
which differs from \eqref{sequentialN} only by the value of the index. Therefore, line by line applying the argument above used for proving that $\vartheta_N=\widehat\vartheta_N$, we can show that $\vartheta_{N-1}=\widehat\vartheta_{N-1}.$ Proceeding in such a way, we can show that $\vartheta_{n}=\widehat\vartheta_{n}$ for each $n\in\{0,\dots,N\}$.
\end{proof}

\begin{remark}
Notice that Proposition \ref{propBinomial} only demonstrates the equivalence of the specific strategies of backward recursion and sequential regression in the binomial model, not necessarily the uniqueness of the minimizing strategy. However, the uniqueness of the solution to \eqref{problem}, that is of the optimal stochastic integral, the associated strategy, and a constant follows  from the strict convexity of the quadratic objective in \eqref{problem} and some computations under the risk-neutral measure.
\end{remark}

\begin{Example}
Consider a 3-step binomial asset pricing model with $S_0=4,u=2,d=\frac{1}{2},r=\frac{1}{4},p=\frac{1}{2},q=\frac{1}{2}$, and a \textit{European Call Option} expiring at time $N=3$ with strike price $K=1$. Note that in this market, the  one-step risk-neutral probabilities are $\Tilde{p}=\Tilde{q}=\frac{1}{2}$ and the non-zero interest rate can be handled by considering  the discounted stock. We will illustrate the optimal hedge using both \eqref{sequential} and \eqref{deltaHedging}. Recall that the value at time $N$ of a European Call is given by 
\begin{align}
    V_N = (S_N-K)^+.
\end{align}
We compute the stock prices and discounted asset values at each time step for each possible combination of coin flips using the following formulas:
\begin{displaymath}
\begin{split}
    S_n(\w_1,...,\w_n) = S_0*u^{\text{(\# heads)}}*d^{\text{(\# tails)}} \\
    V_n(\w_1,...,\w_n) = \left(\frac{1}{1+r}\right) \Tilde{\E}[V_{n+1} \mid \w_1,...,\w_n],
    \end{split}
\end{displaymath}
where $\mathbb P$ is the risk-neutral probability measure. 
The trees of stock prices and discounted asset values are shown in Figure \ref{binomial}. For brevity, we will calculate a hedge using both formulas at time 2, since $A_N=V_N$ in the sequential regression formula at the penultimate time step. We aim to show that $\vartheta_2(TT)=\xi_2(TT)$. Calculating $\vartheta_2(TT)$ using \eqref{deltaHedging} yields 
\begin{displaymath}
    \frac{1-0}{2-\frac{1}{2}}=\frac{1}{\frac{3}{2}}=\frac{2}{3}.
\end{displaymath}

Using \eqref{sequential}, $\xi_2(TT)$ is given by 
\begin{displaymath}
\begin{split}
    & \frac{\Cov{\cF_2}{V_3}{\Delta S_3\mid TT}}{\Var{\cF_2}{(\Delta S_3\mid TT)}} \\
    & = \frac{\E_{\cF_{2}}[(V_3-\E_{\cF_{2}}[V_3 \mid TT])(\Delta S_3 - \E_{\cF_{2}}[\Delta S_3 \mid TT]) \mid TT]}{\E_{\cF_{2}}[(\Delta S_3 - \E_{\cF_{2}}[\Delta S_3 \mid TT])^2 \mid TT]} \\
    & = \frac{\E_{\cF_{2}}[(V_3-(\frac{1}{2}*1+\frac{1}{2}*0))(\Delta S_3 - (\frac{1}{2}*1+\frac{1}{2}*-\frac{1}{2})) \mid TT]}{\E_{\cF_{2}}[\Delta S_3^2 \mid TT]-\E_{\cF_{2}}[\Delta S_3 \mid TT]^2} \\
    & = \frac{\E_{\cF_{2}}[(V_3-\frac{1}{2})(\Delta S_3-\frac{1}{4}) \mid TT]}{\frac{5}{8}-\frac{1}{16}} \\
    & = \frac{(\frac{1}{2})(1-\frac{1}{2})(1-\frac{1}{4})+(\frac{1}{2})(0-\frac{1}{2})(-\frac{1}{2}-\frac{1}{4})}{\frac{9}{16}} \\
    & = \frac{\frac{3}{16}+\frac{3}{16}}{\frac{9}{16}} \\
    & = \frac{2}{3}.
    \end{split}
\end{displaymath}

This illustrates the equivalence of \eqref{sequential} and \eqref{deltaHedging} in the context of this example. \\

\begin{figure}[h]
\centering
\begin{subfigure}{.5\textwidth}
  \centering
  \begin{tikzpicture}[line cap=round,line join=round,>=triangle 45,x=1cm,y=1cm]
        \draw [-,line width=0.5pt] (-3.4641016151377544,-1) -- (-2.598076211353316,-0.5);
        \draw [-,line width=0.5pt] (-2.598076211353316,-0.5) -- (-1.7320508075688772,0);
        \draw [-,line width=0.5pt] (-1.7320508075688772,0) -- (-0.8660254037844386,0.5);
        \draw [-,line width=0.5pt] (-1.7320508075688772,0) -- (-0.8660254037844386,-0.5);
        \draw [-,line width=0.5pt] (-2.598076211353316,-0.5) -- (-1.7320508075688772,-1);
        \draw [-,line width=0.5pt] (-1.7320508075688772,-1) -- (-0.8660254037844386,-0.5);
        \draw [-,line width=0.5pt] (-3.4641016151377544,-1) -- (-2.598076211353316,-1.5);
        \draw [-,line width=0.5pt] (-2.598076211353316,-1.5) -- (-1.7320508075688772,-1);
        \draw [-,line width=0.5pt] (-1.7320508075688772,-1) -- (-0.8660254037844386,-1.5);
        \draw [-,line width=0.5pt] (-2.598076211353316,-1.5) -- (-1.7320508075688772,-2);
        \draw [-,line width=0.5pt] (-1.7320508075688772,-2) -- (-0.8660254037844386,-1.5);
        \draw [-,line width=0.5pt] (-1.7320508075688772,-2) -- (-0.8660254037844386,-2.5);
        \begin{scriptsize}
        \draw [fill=black] (-3.4641016151377544,-1) circle (1.5pt);
        \draw[color=black] (-3.4762183455025912,-0.8860699289787186) node [yshift=1ex] {$4$};
        \draw [fill=black] (-2.598076211353316,-0.5) circle (1.5pt);
        \draw[color=black] (-2.5575235643431045,-0.41123891849179217) node [yshift=1ex]  {$8$};
        \draw [fill=black] (-1.7320508075688772,0) circle (1.5pt);
        \draw[color=black] (-1.6697960229979825,0.08939812517377155) node [yshift=1ex]  {$16$};
        \draw [fill=black] (-0.8660254037844386,0.5) circle (1.5pt);
        \draw[color=black] (-0.80271330819577,0.5900351688393352) node [yshift=1ex]  {$32$};
        \draw [fill=black] (-0.8660254037844386,-0.5) circle (1.5pt);
        \draw[color=black] (-0.8233581347386799,-0.41123891849179217) node [yshift=1ex]  {$8$};
        \draw [fill=black] (-1.7320508075688772,-1) circle (1.5pt);
        \draw[color=black] (-1.6904408495408922,-0.9118759621573559) node [yshift=1ex]  {$4$};
        \draw [fill=black] (-2.598076211353316,-1.5) circle (1.5pt);
        \draw[color=black] (-2.5575235643431045,-1.407351799187192) node [yshift=1ex]  {$2$};
        \draw [fill=black] (-0.8660254037844386,-1.5) circle (1.5pt);
        \draw[color=black] (-0.8233581347386799,-1.407351799187192) node [yshift=1ex]  {$2$};
        \draw [fill=black] (-1.7320508075688772,-2) circle (1.5pt);
        \draw[color=black] (-1.6904408495408922,-1.9079888428527558) node [yshift=1ex]  {$1$};
        \draw [fill=black] (-0.8660254037844386,-2.5) circle (1.5pt);
        \draw[color=black] (-0.6014262494023993,-2.382819853339682) node [xshift=-1.6ex] [yshift=1.5ex]  {$\frac{1}{2}$};
        \end{scriptsize}
    \end{tikzpicture}
  \caption{Stock Prices}
\end{subfigure}%
\begin{subfigure}{.5\textwidth}
  \centering
  \begin{tikzpicture}[line cap=round,line join=round,>=triangle 45,x=1cm,y=1cm]
        \draw [-,line width=0.5pt] (-3.4641016151377544,-1) -- (-2.598076211353316,-0.5);
        \draw [-,line width=0.5pt] (-2.598076211353316,-0.5) -- (-1.7320508075688772,0);
        \draw [-,line width=0.5pt] (-1.7320508075688772,0) -- (-0.8660254037844386,0.5);
        \draw [-,line width=0.5pt] (-1.7320508075688772,0) -- (-0.8660254037844386,-0.5);
        \draw [-,line width=0.5pt] (-2.598076211353316,-0.5) -- (-1.7320508075688772,-1);
        \draw [-,line width=0.5pt] (-1.7320508075688772,-1) -- (-0.8660254037844386,-0.5);
        \draw [-,line width=0.5pt] (-3.4641016151377544,-1) -- (-2.598076211353316,-1.5);
        \draw [-,line width=0.5pt] (-2.598076211353316,-1.5) -- (-1.7320508075688772,-1);
        \draw [-,line width=0.5pt] (-1.7320508075688772,-1) -- (-0.8660254037844386,-1.5);
        \draw [-,line width=0.5pt] (-2.598076211353316,-1.5) -- (-1.7320508075688772,-2);
        \draw [-,line width=0.5pt] (-1.7320508075688772,-2) -- (-0.8660254037844386,-1.5);
        \draw [-,line width=0.5pt] (-1.7320508075688772,-2) -- (-0.8660254037844386,-2.5);
        \begin{scriptsize}
        \draw [fill=black] (-3.4641016151377544,-1) circle (1.5pt);
        \draw[color=black] (-3.4762183455025912,-0.8860699289787186) node [yshift=1.5ex] {$\frac{88}{25}$};
        \draw [fill=black] (-2.598076211353316,-0.5) circle (1.5pt);
        \draw[color=black] (-2.5575235643431045,-0.41123891849179217) node [yshift=1.5ex]  {$\frac{184}{25}$};
        \draw [fill=black] (-1.7320508075688772,0) circle (1.5pt);
        \draw[color=black] (-1.6697960229979825,0.08939812517377155) node [yshift=1.5ex]  {$\frac{76}{5}$};
        \draw [fill=black] (-0.8660254037844386,0.5) circle (1.5pt);
        \draw[color=black] (-0.80271330819577,0.5900351688393352) node [yshift=1ex]  {$31$};
        \draw [fill=black] (-0.8660254037844386,-0.5) circle (1.5pt);
        \draw[color=black] (-0.8233581347386799,-0.41123891849179217) node [yshift=1ex]  {$7$};
        \draw [fill=black] (-1.7320508075688772,-1) circle (1.5pt);
        \draw[color=black] (-1.6904408495408922,-0.9118759621573559) node [yshift=1.5ex]  {$\frac{16}{5}$};
        \draw [fill=black] (-2.598076211353316,-1.5) circle (1.5pt);
        \draw[color=black] (-2.5575235643431045,-1.407351799187192) node [yshift=1.5ex]  {$\frac{36}{25}$};
        \draw [fill=black] (-0.8660254037844386,-1.5) circle (1.5pt);
        \draw[color=black] (-0.8233581347386799,-1.407351799187192) node [yshift=1ex]  {$1$};
        \draw [fill=black] (-1.7320508075688772,-2) circle (1.5pt);
        \draw[color=black] (-1.6904408495408922,-1.9079888428527558) node [yshift=1.5ex]  {$\frac{2}{5}$};
        \draw [fill=black] (-0.8660254037844386,-2.5) circle (1.5pt);
        \draw[color=black] (-0.6014262494023993,-2.382819853339682) node [xshift=-1.6ex] [yshift=1ex]  {$0$};
        \end{scriptsize}
    \end{tikzpicture}
  \caption{Discounted Option Values}
\end{subfigure}
\caption{Stock Prices and Option Value Trees}
\label{binomial}
\end{figure}

\end{Example}

\begin{remark}
Although \eqref{sequential} and \eqref{deltaHedging} give us equivalent results in the binomial case, it is important to note that \eqref{deltaHedging} is specifically limited to the binomial case, while \eqref{sequential} can be extended to general discrete-time market models, including incomplete models. In the following section, we apply \eqref{sequential} to the incomplete trinomial model.
\end{remark}
\section{Incomplete Markets}\label{secIncompleteMarkets}

While the binomial model is often used as an introductory tool, most models used in practice exhibit incompleteness, that is it is not always possible to construct a hedging strategy that  reduces loss to zero. We now introduce a tractable example of an incomplete market.

\subsection{Trinomial Asset Pricing Model}

 An example of an incomplete market is the \textit{trinomial asset pricing model}. Similar to the binomial model, we have a risky asset and a risk-free asset, and the value of the risky asset at each time step is determined by a small set of outcomes. This time, we have three possible outcomes for the coin flip instead of two. That is, along with the possibility of an increase by a factor of $u$ and decrease by a factor of $d$, we allow for the possibility that the stock price does not change between two consecutive time steps.

 \begin{figure}[h]
 \centering
    \begin{tikzpicture}[line cap=round,line join=round,>=triangle 45,x=1cm,y=1cm]
    \draw [line width=0.5pt] (-5,0)-- (-3.5,1);
    \draw [line width=0.5pt] (-5,0)-- (-3.5,0);
    \draw [line width=0.5pt] (-5,0)-- (-3.5,-1);
    \draw [line width=0.5pt] (-3.5,1)-- (-2,2);
    \draw [line width=0.5pt] (-3.5,1)-- (-2,1);
    \draw [line width=0.5pt] (-3.5,0)-- (-2,1);
    \draw [line width=0.5pt] (-3.5,0)-- (-2,0);
    \draw [line width=0.5pt] (-3.5,1)-- (-2,0);
    \draw [line width=0.5pt] (-3.5,0)-- (-2,-1);
    \draw [line width=0.5pt] (-3.5,-1)-- (-2,0);
    \draw [line width=0.5pt] (-3.5,-1)-- (-2,-2);
    \draw [line width=0.5pt] (-3.5,-1)-- (-2,-1);
    \begin{scriptsize}
    \draw [fill=black] (-5,0) circle (1.5pt);
    \draw[color=black] (-4.934342944507471,0.17145307093634843) [yshift=5pt] node {$S$};
    \draw [fill=black] (-3.5,1) circle (1.5pt);
    \draw[color=black] (-3.4363801013803337,1.1674807205549196) [yshift=5pt] node {$uS$};
    \draw [fill=black] (-3.5,0) circle (1.5pt);
    \draw[color=black] (-3.4363801013803337,0.17145307093634843) [yshift=5pt] node {$S$};
    \draw [fill=black] (-3.5,-1) circle (1.5pt);
    \draw[color=black] (-3.4363801013803337,-0.8324173160807942) [yshift=5pt] node {$dS$};
    \draw [fill=black] (-2,2) circle (1.5pt);
    \draw[color=black] (-1.9384172582531956,2.1713511075720624) [yshift=5pt] node {$u^2S$};
    \draw [fill=black] (-2,1) circle (1.5pt);
    \draw[color=black] (-1.9384172582531956,1.1674807205549196) [yshift=5pt] node {$uS$};
    \draw [fill=black] (-2,0) circle (1.5pt);
    \draw[color=black] (-1.9384172582531956,0.17145307093634843) [yshift=5pt] node {$S$};
    \draw [fill=black] (-2,-1) circle (1.5pt);
    \draw[color=black] (-1.9384172582531956,-0.8324173160807942) [yshift=5pt] node {$dS$};
    \draw [fill=black] (-2,-2) circle (1.5pt);
    \draw[color=black] (-1.9384172582531956,-1.8284449656993653) [yshift=5pt] node {$d^2S$};
    \end{scriptsize}
    \end{tikzpicture}
    \caption{Example of a 2-period Trinomial Model}
    \label{trinomial}
\end{figure}
    \begin{remark}
        For the above formulation of the trinomial model, we again require that $u=\frac{1}{d}$. However, as in the binomial model, this is not necessarily the case.
    \end{remark}
Attempting backward recursion on this model using the non-discounted wealth process, we get 
\begin{displaymath}
    X_2(\w_1\w_2) = (1 + r)(X_1(\w_1) - \vth_1(\w_1)S_1(\w_1)) + \vartheta_1(\w_1)S_2(\w_1\w_2)
\end{displaymath}
and 
\begin{displaymath}
    X_1(\w_1) = (1 + r)(X_0 - \vartheta_0S_0) + \vartheta_0 S_1(\w_1) 
\end{displaymath}

Note that there are three possible values for $\w_1$ and three possible values for $\w_2$, and we must solve for $\vth_0$, $X_0$, each $\vartheta_1(\w_1)$, and each $X_1(\w_1)$, giving us eight (8) unknowns and twelve (12) equations. This makes the system overdetermined. In particular, simple matrix calculations reveal that, in general, we have no solution for $X_0$.

\section{Stability Under Model Perturbations}\label{secStability}
We now turn to the question of stability of the F\"ollmer-Schweizer decomposition. There are different kinds of perturbations one can consider, and for example, stability with respect to perturbations of $V_N$ is considered in \cite{monatStricker}. In this paper, we consider perturbations of the stock price process. For the stability analysis, as we work on finite probability spaces, the exact form of perturbations is not important, and we will suppose that there is a family of adapted stock price processes parametrized by $\varepsilon$, $\left( S^\varepsilon\right)_{\varepsilon\in (-\varepsilon_0,\varepsilon_0)}$, for some $\varepsilon_0 >0$. An example of such a family corresponds to linear perturbations of the drift and volatility considered in the following section. 
Here we will only suppose that 
\begin{displaymath}
S^\varepsilon \to S^0
\end{displaymath}
in the sense that 
\begin{equation}\label{Sstab}
\lim\limits_{\varepsilon\to 0}S^\varepsilon_n (\omega) = S^0_n (\omega),\quad \text{for every} \ n\in\{0,\dots,N\} \ \text{and} \ \omega\in\Omega.
\end{equation}
The following result asserts that the F\"ollmer-Schweizer decomposition on finite probability spaces is stable under perturbations of the stock of the form \eqref{Sstab}.
\begin{Theorem}\label{thmStab}
On a finite probability space, let us consider a family of stock price processes $\left( (S^\varepsilon_n)_{n\in\{0,\dots,N\}}\right)_{\varepsilon\in (-\varepsilon_0,\varepsilon_0)}$, for some $\varepsilon_0 >0$, satisfying \eqref{Sstab}. Let $V_N$ be given. Then the corresponding family of the F\"olmer-Schweizer decompositions
$$V_N = V_0^\varepsilon + \sum_{j=1}^N\widehat \vartheta^\varepsilon_j \Delta S^\varepsilon_j+L^\varepsilon_N,\quad \varepsilon\in (-\varepsilon_0,\varepsilon_0),$$
satisfies
\begin{equation}\label{FSstab}
\begin{split}
\lim\limits_{\varepsilon\to 0} V_0^\varepsilon &= V_0^0,\\
\lim\limits_{\varepsilon\to 0} L^\varepsilon_n &= L^0_n,\quad n\in\{0,\dots,N\},\\
\lim\limits_{\varepsilon\to 0} \widehat\vartheta^\varepsilon_n &= \widehat\vartheta^0_n,\quad n\in\{1,\dots,N\},
\end{split}
\end{equation} 
where the equalities hold for every $\omega\in\Omega$. 
As a consequence, we also have
\begin{equation}\label{Mstab}\lim\limits_{\varepsilon\to 0} 
 \sum_{j=1}^n\widehat \vartheta^\varepsilon_j \Delta S^\varepsilon_j =  \sum_{j=1}^n\widehat \vartheta^0_j \Delta S^0_j,\quad n\in\{1,\dots,N\},\ \omega\in\Omega.
\end{equation}
\end{Theorem}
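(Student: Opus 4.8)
The plan is to exploit the fact that, on a finite probability space, the entire F\"ollmer-Schweizer decomposition is produced from the data $(S^\varepsilon, V_N)$ by a finite, purely algebraic recursion, and that every ingredient of this recursion depends continuously on the values $\Delta S^\varepsilon_n(\omega)$. Fix $\omega\in\Omega$ and let $A\in\cF_{n-1}$ be the atom containing it. For any random variable $Y$,
$$\E[Y\mid\cF_{n-1}](\omega)=\sum_{\omega'\in A}Y(\omega')\frac{\P[\omega']}{\P[A]},$$
so the conditional expectation, and therefore each conditional covariance $\Cov{\cF_{n-1}}{\cdot}{\cdot}$ and variance $\Var{\cF_{n-1}}{\cdot}$ appearing in \eqref{sequential}, is a fixed polynomial in the values of its arguments, with coefficients depending only on the fixed, strictly positive probabilities $\P[\omega']$. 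Since \eqref{Sstab} gives $\Delta S^\varepsilon_n(\omega)\to\Delta S^0_n(\omega)$ for every $n$ and $\omega$, the whole argument reduces to showing that each component of the decomposition is obtained by composing finitely many such continuous maps together with divisions by the conditional variances $\Var{\cF_{n-1}}{\Delta S^\varepsilon_n}$, and then passing to the limit. The only point requiring care is that these denominators must not vanish.

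First I would prove $\widehat\vartheta^\varepsilon_n\to\widehat\vartheta^0_n$ by backward induction on $n$, running $n$ from $N$ down to $1$ and using \eqref{sequential}. For the base case $n=N$ the numerator and denominator of \eqref{sequential} are $\Cov{\cF_{N-1}}{V_N}{\Delta S^\varepsilon_N}$ and $\Var{\cF_{N-1}}{\Delta S^\varepsilon_N}$; both converge to their $\varepsilon=0$ counterparts by the continuity noted above, and dividing gives $\widehat\vartheta^\varepsilon_N\to\widehat\vartheta^0_N$. For the inductive step, assume $\widehat\vartheta^\varepsilon_j\to\widehat\vartheta^0_j$ for all $j\in\{n+1,\dots,N\}$. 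Then $V_N-\sum_{j=n+1}^N\widehat\vartheta^\varepsilon_j\Delta S^\varepsilon_j$ converges pointwise on $\Omega$, being a finite sum of products of convergent quantities, so its conditional covariance with $\Delta S^\varepsilon_n$ converges; dividing by $\Var{\cF_{n-1}}{\Delta S^\varepsilon_n}$ yields $\widehat\vartheta^\varepsilon_n\to\widehat\vartheta^0_n$.

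With the strategies in hand, the remaining conclusions are immediate. Statement \eqref{Mstab} follows at once, since $\sum_{j=1}^n\widehat\vartheta^\varepsilon_j\Delta S^\varepsilon_j$ is a finite sum of products of convergent sequences. Taking expectations in the decomposition and using $\E[L^\varepsilon_N]=0$ (as $L^\varepsilon$ is a martingale with $L^\varepsilon_0=0$), we get $V_0^\varepsilon=\E[V_N]-\E\big[\sum_{j=1}^N\widehat\vartheta^\varepsilon_j\Delta S^\varepsilon_j\big]$, which converges; and since $L^\varepsilon$ is a martingale, $L^\varepsilon_n=\E\big[V_N-V_0^\varepsilon-\sum_{j=1}^N\widehat\vartheta^\varepsilon_j\Delta S^\varepsilon_j\mid\cF_n\big]$ converges for each $n$ by the same continuity argument.

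The main obstacle is the nonvanishing of the denominators, which is exactly where nondegeneracy enters. Because the F\"ollmer-Schweizer decomposition of $V_N$ relative to $S^0$ is given by \eqref{sequential}, the process $S^0$ satisfies (ND), and hence $\Var{\cF_{n-1}}{\Delta S^0_n}>0$ on every atom of $\cF_{n-1}$. By the continuity established above, $\Var{\cF_{n-1}}{\Delta S^\varepsilon_n}\to\Var{\cF_{n-1}}{\Delta S^0_n}>0$, so after possibly shrinking $\varepsilon_0$ these conditional variances are bounded away from $0$ uniformly in $\varepsilon$ and $\omega$ (the space being finite, there are only finitely many atoms and finitely many $n$). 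This simultaneously confirms that the quotients in \eqref{sequential} are well-defined for small $\varepsilon$ and legitimizes passing to the limit through each of them, completing the argument.
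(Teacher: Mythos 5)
Your proof is correct and follows essentially the same route as the paper's: backward recursion on the sequential-regression formula \eqref{sequential}, using that conditional expectations on a finite probability space commute with pointwise limits, and then recovering the convergence of $V_0^\varepsilon$ and of $L^\varepsilon_n$ by taking (conditional) expectations in the decomposition and invoking the martingale property of $L^\varepsilon$. Your explicit verification that the denominators $\Var{\cF_{n-1}}{\Delta S^\varepsilon_n}$ stay bounded away from zero for small $\varepsilon$ is a point the paper leaves implicit, but it does not alter the structure of the argument.
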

\begin{proof}
The proof goes recursively, backward in $n$. First, let us consider $n = N$.  From \eqref{Sstab}, we get
\begin{displaymath}
\lim\limits_{\e\to 0}\Delta S_N^\e = S_N^0,\quad \omega\in\Omega.
\end{displaymath}
Further, we work on a finite probability space, via the formal definition of conditional expectation, from \eqref{Sstab}, without any additional assumptions,  we get 
\begin{displaymath}
 \lim\limits_{\varepsilon\to 0}\mathbb E_{\cF_{N-1}} \left[\Delta S_N^\e\right] = \mathbb E_{\cF_{N-1}} \left[\Delta S_N^0\right].
\end{displaymath}
As a consequence, we obtain
\begin{equation}\label{12241}
\begin{split}
   \lim\limits_{\varepsilon\to 0} \widehat \vartheta_{N}^\e &= \lim\limits_{\varepsilon\to 0} \frac{\Cov{\cF_{N-1}}{V_N}{\Delta S_N^\e}}{\Var{\cF_{N-1}}{\Delta S_N^\e}} \\
   & = \lim\limits_{\varepsilon\to 0} \frac{\mathbb E_{\cF_{N-1}}\left[\left(V_N -\mathbb E_{\cF_{N-1}} \left[V_N\right]\right) \left( \Delta S_N^\e -\mathbb E_{\cF_{N-1}} \left[\Delta S_N^\e\right]\right)\right]}{\mathbb E_{\cF_{N-1}}\left[ \left( \Delta S_N^\e -\mathbb E_{\cF_{N-1}} \left[\Delta S_N^\e\right]\right)^2\right]}\\
   & =  \frac{\mathbb E_{\cF_{N-1}}\left[\left(V_N -\mathbb E_{\cF_{N-1}} \left[V_N\right]\right) \lim\limits_{\varepsilon\to 0}\left( \Delta S_N^\e -\mathbb E_{\cF_{N-1}} \left[\Delta S_N^\e\right]\right)\right]}{\mathbb E_{\cF_{N-1}}\left[ \lim\limits_{\varepsilon\to 0}\left( \Delta S_N^\e -\mathbb E_{\cF_{N-1}} \left[\Delta S_N^\e\right]\right)^2\right]}\\
      & =  \frac{\mathbb E_{\cF_{N-1}}\left[\left(V_N-\mathbb E_{\cF_{N-1}} \left[V_N\right]\right)\left(  \lim\limits_{\varepsilon\to 0}\Delta S_N^\e - \lim\limits_{\varepsilon\to 0}\mathbb E_{\cF_{N-1}} \left[\Delta S_N^\e\right]\right)\right]}{\mathbb E_{\cF_{N-1}}\left[\left(  \lim\limits_{\varepsilon\to 0}\Delta S_N^\e - \lim\limits_{\varepsilon\to 0}\mathbb E_{\cF_{N-1}} \left[\Delta S_N^\e\right]\right)^2\right]} \\ 
            & =  \frac{\mathbb E_{\cF_{N-1}}\left[\left(V_N -\mathbb E_{\cF_{N-1}} \left[V_N\right]\right)\left( \Delta S_N^0 - \mathbb E_{\cF_{N-1}} \left[\Delta S_N^0\right]\right)\right]}{\mathbb E_{\cF_{N-1}}\left[\left( \Delta S_N^0 - \mathbb E_{\cF_{N-1}} \left[\Delta S_N^0\right]\right)^2\right]} \\ 
            &=\frac{\Cov{\cF_{N-1}}{V_N}{\Delta S_N^0}}{\Var{\cF_{N-1}}{\Delta S_N^0}} \\
            &=\widehat \vartheta_{N}^0,
   \end{split}
\end{equation}
where the chain of equalities holds for every $\omega\in\Omega$. 
If $N = 1$,\eqref{12241}  implies the third equality in \eqref{FSstab}, and the remaining assertions of the theorem follow.  If $N>1$
denoting $A^\e_{n} := V_N -  \sum\limits_{j = n + 1}^{N}\widehat \vartheta_j^\e\Delta S_j^\e
$, $n\in\{0,\dots,N-1\}$, $\e\in(-\e_0,\e_0)$, from \eqref{12241}, we get
$$  \lim\limits_{\varepsilon\to 0} A_{N-1}^\e = A_{N-1}^0,\quad\w\in\Omega.$$ Consequently, similarly to \eqref{12241}, we obtain
\begin{equation}\label{12242}
\begin{split}
\lim\limits_{\varepsilon\to 0} \widehat \vartheta_{N-1}^\e &= \lim\limits_{\varepsilon\to 0} \frac{\Cov{\cF_{N-2}}{A_{N-1}^\e}{\Delta S_{N-1}^\e}}{\Var{\cF_{N-2}}{\Delta S_{N-1}^\e}} \\
&= \frac{\Cov{\cF_{N-2}}{\lim\limits_{\varepsilon\to 0}A_{N-1}^\e}{\lim\limits_{\varepsilon\to 0}\Delta S_{N-1}^\e}}{\Var{\cF_{N-2}}{\lim\limits_{\varepsilon\to 0}\Delta S_{N-1}^\e}} \\
&= \frac{\Cov{\cF_{N-2}}{A_{N-1}^0}{\Delta S_{N-1}^0}}{\Var{\cF_{N-2}}{\Delta S_{N-1}^0}} \\
&= \widehat \vartheta_{N-1}^0,\quad \omega\in\Omega.
\end{split}
\end{equation}
Proceeding in such a manner, one can show that 
$$\lim\limits_{\varepsilon\to 0} \widehat\vartheta^\varepsilon_n = \widehat\vartheta^0_n,\quad n\in\{1,\dots,N\},\ \omega\in\Omega,$$
which is the last equality in \eqref{FSstab}. In turn, this and \eqref{Sstab} imply \eqref{Mstab}. Therefore, for every $\varepsilon\in (-\varepsilon_0,\varepsilon_0)$, by taking expectation in 
\begin{equation}\label{12243}V_N = V_0^\varepsilon + \sum_{j=1}^N\widehat \vartheta^\varepsilon_j \Delta S^\varepsilon_j+L^\varepsilon_N,
\end{equation} 
and using $\mathbb E\left[ L^\varepsilon_N\right] = 0$, $\varepsilon\in (-\varepsilon_0,\varepsilon_0)$, we deduce via \eqref{Mstab} that 
$\lim\limits_{\e\to 0}V^\e_0 = V^0_0$, i.e., the first equality in \eqref{FSstab} holds.  Consequently, as the left-hand side in \eqref{12243} does not depend on $\e$, from \eqref{12243}, the convergence of $V^\e_0$ to  $V^0_0$ and \eqref{Mstab}, we deduce that $\lim\limits_{\varepsilon\to 0} L^\e_N = L^0_N$ for every  $\w\in\Omega$. Finally, as $L^\e$'s are $\mathbb P$-martingales, using $\mathbb E_{\mathcal F_{n}}\left[L^\e_N \right] = L^\e_n$, we conclude that 
$$\lim\limits_{\e\to 0}L^\e_n =L^0_n,\quad \text{for every~} n\in\{0,\dots,N\}~\text{and}~\w\in\Omega,$$
which is the second equality in \eqref{FSstab}. The proof is complete. 
\end{proof}
\section{Asymptotic analysis}\label{secAsymptotics}
While stability tells us whether there is a convergence of the problem outputs under perturbations of the input data or not, the asymptotic analysis gives a quantitative estimate of how does the problem respond to such perturbations. In order to make such estimates assuming merely $S^\e\to S^0$ as in \eqref{Sstab} from the previous section is not enough. We need to parametrize perturbations more precisely. Before stating our form of perturbations, one can also consider that, in practice,  the dynamics of the stock price process, is commonly decomposed into two parts. The first one is drift, or in discounted settings, it can also be stated as the market price of risk. This part is responsible for the trend of the stock. The second part captures the fluctuations, that is, how much can the stock alternative in a given interval. Therefore, one can formulate the following dynamics of the stock for the base model\footnote{Here the base model is the one that corresponds to $\e= 0$ in the notations of sections \ref{secStability} and \ref{secAsymptotics}.}, 
        \begin{equation}\label{Ssem}
            \Delta S_n^0 = \lambda_n+ \sigma_n \Delta W_n = \lambda_n\Delta t+ \sigma_n \Delta W_n,\quad n\in\{1,\dots,N\},
        \end{equation}
        where $\lambda$ and $\sigma$ are predictable processes, $\Delta t \equiv 1$ represents the change in time, and $\Delta W_n$ is an $\cF_n$-measurable increment of a martingale with initial value $0$, interpreted as an error or ``noise" term, where we additionally suppose that the standard deviation of $\Delta W_n$ is $\Delta t=1$ for normalization purposes. \eqref{Ssem} is also consistent with the so-called semimartingale decomposition of the stock price process, where a semimartingale can be defined to be a process that can be written as a sum of a martingale (noise term) and an adapted process (drift term). In the finite probability settings, in fact, the drift term in the semimartingale decomposition can be chosen to be predictable. 
        \begin{Remark}\label{remLS}
        We observe that given any process $S$, and once the time step $\Delta t>0$ is fixed (and is constant in this paper, for simplicity of notations), the processes $\lambda$ and $\sigma$  can be obtained as follows
        \begin{displaymath}
        \begin{split}
        \lambda_n &= \frac{\mathbb E_{\cF_{n-1}}\left[ \Delta S_n\right]}{\Delta t},\\ 
        \sigma_n &= \sqrt{\Var{\cF_{n-1}}{\Delta S_n}},\\
         \Delta W_n &= \frac{\Delta S^0_n - \lambda_n\Delta t}{\sigma_n}1_{\{\sigma_n>0\}}, \quad  n\in\{1,\dots,N\}.
        \end{split}
        \end{displaymath}
        \end{Remark}

For perturbations of the underlying dynamics in \eqref{Ssem}, we can consider simultaneous (or separate) distortions of both the drift and the noise terms in \eqref{Ssem}. Therefore, we now define model perturbations as
        \begin{displaymath}
            \Delta S_n^\e = (\lambda_n + \e\lambda'_n)\Delta t + (\sigma_n + \e \sigma'_n)\Delta W_n + \e\sigma''_n\Delta W^{\perp}_n, \quad  n\in\{1,\dots,N\},\ \e\in(-\e_0,\e_0),
        \end{displaymath}
        where $\lambda'$, $\sigma'$, and $\sigma''$  are predictable processes, $\Delta W^{\perp}_n$ is an $\cF_n$-measurable normalized noise term, which is conditionally uncorrelated from $\Delta W_n$, that is $$\mathbb E_{\cF_{n-1}}\left[ \Delta W^{\perp}_n\right] = 0, \ \Var{\cF_{n-1}}{\Delta W^{\perp}_n} = 1, \ \text{and} \ \Cov{\cF_{n-1}}{\Delta { W}_n}{\Delta W^{\perp}_n} = 0,\ n\in\{1,\dots,N\}.$$ and $\e_0$ is a strictly positive constant. With 
        \begin{equation}\label{S'}\Delta S'_n:= \lambda'_n\Delta t + \sigma'_n\Delta W_n +\sigma''_n\Delta W^{\perp}_n,\quad  n\in\{1,\dots,N\}, \end{equation}
         we can rewrite the dynamics of the perturbed processes as
        \begin{equation}\label{SsemEps}
         \Delta S_n^\e = \Delta S_n^0 + \e\Delta S'_n,\quad  n\in\{1,\dots,N\}, \ \e\in(-\e_0,\e_0).
        \end{equation}
        \begin{Remark}
        Similarly to the process $\Delta S$, if one starts from a given perturbation process $\Delta{S'}$, it can be represented in the form \eqref{S'}, as follows: in the notations of Remark \ref{remLS}, we have
        \begin{displaymath}
        \lambda'_n = \frac{\mathbb E_{\cF_{n-1}}\left[ \Delta S'_n\right]}{\Delta t},\quad  n\in\{1,\dots,N\},
        \end{displaymath}
        and the computation of the remaining parameters goes along the lines of \cite[Example 2.3.3, p.72]{Shreve2}. Let us consider 
        \begin{displaymath}
       \Delta S'_n - \mathbb E_{\cF_{n-1}}\left[ \Delta S'_n\right] = 
       \sigma'_n \Delta W_n + \left( \Delta S'_n - \mathbb E_{\cF_{n-1}}\left[ \Delta S'_n\right] -\sigma'_n  \Delta W_n \right),\quad  n\in\{1,\dots,N\},
        \end{displaymath}
        where $\sigma'_n$ has to be determined in a way that term in the brackets is conditionally uncorrelated from $\Delta W_n$. As, $\Var{\cF_{n-1}}{\Delta W_n} =1$, direct calculations give
        \begin{displaymath}
        \sigma'_n = \mathbb E_{\cF_{n-1}}\left[ \Delta W_n,  \Delta S'_n - \mathbb E_{\cF_{n-1}}\left[ \Delta S'_n\right] \right]
        =\Cov{\cF_{n-1}}{\Delta W_n}{\Delta S'_n},\quad  n\in\{1,\dots,N\},
        \end{displaymath}
        therefore, for $\sigma''_n$ and $\Delta W_n^{\perp}$, we get
        \begin{displaymath}\begin{split}
        \sigma''_n &= \sqrt{\Var{\cF_{n-1}}{\Delta {S}'_n - \mathbb E_{\cF_{n-1}}\left[{\Delta S'_n}\right] -\sigma'_n  \Delta W_n }}
        = \sqrt{\Var{\cF_{n-1}}{{\Delta S'_n}  -\sigma'_n  \Delta W_n }},\\ 
        \Delta W_n^{\perp} &= \frac{\Delta {S}'_n - \mathbb E_{\cF_{n-1}}\left[{\Delta S'_n}\right] -\sigma'_n  \Delta W_n}{\sigma''_n}1_{\{\sigma''_n>0\}},\quad  n\in\{1,\dots,N\}.
        \end{split}
        \end{displaymath}
          \end{Remark}
        The following theorem gives the leading-order correction terms to the components of the F\"ollmer-Schweizer decomposition under perturbations of the form \eqref{SsemEps}. Let us consider the associated family of the F\"olmer-Schweizer decompositions
\begin{equation}\label{FSfam}
V_N = V_0^\varepsilon + \sum_{j=1}^N\widehat \vartheta^\varepsilon_j \Delta S^\varepsilon_j+L^\varepsilon_N,\quad \varepsilon\in (-\varepsilon_0,\varepsilon_0),
\end{equation}
        and let us define recursively, backward in $n$,  process $\widehat\vartheta'$, which will be proven in Theorem \ref{thmAsym} to be the first-order correction to the optimal strategy, as 
\begin{equation}\label{theta'}
\begin{split}
\widehat\vartheta'_N &:= \frac{
\Cov{\cF_{N-1}}{V_N}{\Delta S'_N}}{\Var{\cF_{N-1}}{\Delta S^0_N}}-2\frac{\Cov{\cF_{N-1}}{V_N}{\Delta S^0_N}\Cov{\cF_{N-1}}{\Delta S^0_N}{\Delta S'_N}
}{\left(\Var{\cF_{N-1}}{\Delta S^0_N}\right)^2},\\
\widehat\vartheta'_n& :=\frac{1}{\Var{\cF_{n-1}}{\Delta S^0_n}}\left(\Cov{\cF_{n-1}}{V_N -  \sum\limits_{j = n + 1}^{N}\widehat \vartheta_j^0\Delta S_j^0 }{\Delta S'_n} 
\right. \\
&\hspace{35mm}\left.- \Cov{\cF_{n-1}}{\sum\limits_{j = n + 1}^{N} \widehat\vartheta'_j\Delta S_j^0 + \sum\limits_{j = n + 1}^{N}\widehat \vartheta_j^0\Delta S'_j}{\Delta S^0_n}\right)\\
&\qquad-2\frac{\Cov{\cF_{n-1}}{V_N -  \sum\limits_{j = n + 1}^{N}\widehat \vartheta_j^0\Delta S_j^0 }{\Delta S^0_n} \Cov{\cF_{N-1}}{\Delta S^0_n}{\Delta S'_n}}{\left(\Var{\cF_{n-1}}{\Delta S^0_n}\right)^2},\\
&\hspace{85mm} n \in\{N-1, \dots,1\}. 
\end{split}
\end{equation}

        \begin{Theorem}\label{thmAsym}
On a finite probability space, let us consider a family of stock price processes $\left( (S^\varepsilon_n)_{n\in\{0,\dots,N\}}\right)_{\varepsilon\in (-\varepsilon_0,\varepsilon_0)}$, for some $\varepsilon_0 >0$, where the increments, ${\Delta S_n^\e}'$s, are given via  \eqref{SsemEps}. Let $V_N$ be given. 
Then the components of the family of the F\"ollmer-Schweizer decompositions defined in \eqref{FSfam} satisfy
\begin{equation}\label{FSasym}
\begin{split}
\lim\limits_{\varepsilon\to 0} \frac{V_0^\varepsilon - V_0^0}{\e} &=  -\mathbb E\left[ \sum_{j=1}^N\widehat \vartheta^0_j \Delta S'_j +\sum_{j=1}^N\widehat \vartheta'_j \Delta S^0_j\right],\\
\lim\limits_{\varepsilon\to 0} \frac{L^\varepsilon_n - L^0_n}{\e} &= -\mathbb E_{\cF_{n}}\left[\sum\limits_{j=1}^N \widehat\vartheta'_j\Delta S^0_j - \mathbb E\left[\sum\limits_{j=1}^N \widehat\vartheta'_j\Delta S^0_j\right]\right]
-\mathbb E_{\cF_{n}}\left[\sum\limits_{j=1}^N \widehat\vartheta^0_j\Delta S'_j - \mathbb E\left[\sum\limits_{j=1}^N \widehat\vartheta^0_j\Delta S'_j\right]\right],\\
&\hspace{100mm} n\in\{0,\dots,N\},\ \omega\in\Omega,\\
\lim\limits_{\varepsilon\to 0} \frac{\widehat\vartheta^\varepsilon_n - \widehat\vartheta^0_n}{\e} &=\widehat\vartheta'_n,\quad n\in\{1,\dots,N\},\ \omega\in\Omega,
\end{split}
\end{equation}  
where $\vartheta'$ is defined in \eqref{theta'}. 
We also have
\begin{equation}\label{Masym}
\begin{split}
\lim\limits_{\varepsilon\to 0} 
\frac{
 \sum_{j=1}^n\widehat \vartheta^\varepsilon_j \Delta S^\varepsilon_j - \sum_{j=1}^n\widehat \vartheta^0_j \Delta S^0_j}{\e} 
 = \sum_{j=1}^n\widehat \vartheta^0_j \Delta S'_j +\sum_{j=1}^n\widehat \vartheta'_j \Delta S^0_j,\\
  n\in\{1,\dots,N\},\omega\in\Omega.
  \end{split}
\end{equation}
\end{Theorem}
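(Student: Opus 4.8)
The plan is to establish the three limits in \eqref{FSasym} in the order: first the correction $\widehat\vartheta'_n$ to the optimal strategy, then \eqref{Masym}, and finally the corrections to $V_0^\e$ and to $L^\e$. The backbone is the same backward recursion on $n$ used in the proof of Theorem~\ref{thmStab}, except that now, instead of merely passing $\lim_{\e\to0}$ through the conditional covariance and variance in \eqref{sequential}, I would differentiate the ratio in \eqref{sequential} at $\e=0$. The essential structural observation is that, because we work on a finite probability space, each $\widehat\vartheta^\e_n$ is a rational function of $\e$ whose denominator $\Var{\cF_{n-1}}{\Delta S^\e_n}$ is continuous in $\e$ and, by the nondegeneracy condition (ND), strictly positive at $\e=0$; hence it stays bounded away from zero on a neighborhood of $0$, so $\e\mapsto\widehat\vartheta^\e_n$ is smooth near $0$ and its first-order expansion $\widehat\vartheta^\e_n=\widehat\vartheta^0_n+\e\,\widehat\vartheta'_n+o(\e)$ is legitimate. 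This is the point that makes the formal differentiation rigorous.

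For the base case $n=N$ I would write $\Delta S^\e_N=\Delta S^0_N+\e\,\Delta S'_N$ and use bilinearity of the conditional covariance together with $\Var{\cF_{N-1}}{\Delta S^\e_N}=\Var{\cF_{N-1}}{\Delta S^0_N}+2\e\,\Cov{\cF_{N-1}}{\Delta S^0_N}{\Delta S'_N}+\e^2\Var{\cF_{N-1}}{\Delta S'_N}$; differentiating the resulting ratio by the quotient rule at $\e=0$ reproduces the first line of \eqref{theta'} verbatim. For the inductive step I would assume $\widehat\vartheta^\e_j=\widehat\vartheta^0_j+\e\,\widehat\vartheta'_j+o(\e)$ for all $j>n$ and expand the predictable target
\begin{displaymath}
A^\e_n := V_N - \sum_{j=n+1}^N \widehat\vartheta^\e_j\,\Delta S^\e_j = A^0_n + \e\,A'_n + o(\e),\qquad A'_n = -\sum_{j=n+1}^N\left(\widehat\vartheta'_j\,\Delta S^0_j + \widehat\vartheta^0_j\,\Delta S'_j\right),
\end{displaymath}
using the product expansion $\widehat\vartheta^\e_j\,\Delta S^\e_j=\widehat\vartheta^0_j\,\Delta S^0_j+\e(\widehat\vartheta'_j\,\Delta S^0_j+\widehat\vartheta^0_j\,\Delta S'_j)+o(\e)$ term by term. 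Substituting $A^\e_n$ and $\Delta S^\e_n$ into \eqref{sequential}, applying bilinearity of $\Cov{\cF_{n-1}}{\cdot}{\cdot}$ in both arguments and the quotient rule at $\e=0$, and then replacing $A^0_n$ and $A'_n$ by their definitions, lands exactly on the general line of \eqref{theta'}; this closes the induction and yields the third limit in \eqref{FSasym}.

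Granting the strategy correction, \eqref{Masym} follows by summing the product expansion above over $j=1,\dots,n$ and dividing by $\e$. For the $V_0^\e$ correction I would take expectations in \eqref{FSfam}; since each $L^\e$ is a martingale with $L^\e_0=0$ we have $\E[L^\e_N]=0$, so $V_0^\e=\E[V_N]-\E\left[\sum_{j=1}^N \widehat\vartheta^\e_j\,\Delta S^\e_j\right]$, and dividing $V_0^\e-V_0^0$ by $\e$ and passing to the limit (using \eqref{Masym} with $n=N$, where finiteness of $\Omega$ permits the interchange of $\lim$ and $\E$) gives the first line of \eqref{FSasym}. Finally, for $L^\e$ I would write $L^\e_N=V_N-V_0^\e-\sum_{j=1}^N\widehat\vartheta^\e_j\,\Delta S^\e_j$, subtract the $\e=0$ version, divide by $\e$, and take the limit using the two results just obtained; this produces the correction at time $N$, and the general time $n$ follows by applying $\mathbb E_{\cF_n}$ to both sides via the martingale identity $L^\e_n=\mathbb E_{\cF_n}[L^\e_N]$, after which regrouping into centered sums (noting that $\E[\,\cdot\,]$ is a constant) yields the stated second line of \eqref{FSasym}.

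The main obstacle I anticipate is twofold and concentrated in the inductive step. Analytically, one must ensure the $o(\e)$ remainders inherited from the strategies $\widehat\vartheta^\e_j$, $j>n$, do not contaminate the first-order term after being multiplied by $\Delta S^\e_j$ and passed through the conditional covariance and the quotient rule; the rational-function/(ND) argument above is precisely what controls this uniformly on the finite space. Algebraically, the second summand in \eqref{theta'} arises solely from differentiating the denominator $\Var{\cF_{n-1}}{\Delta S^\e_n}$, so the bookkeeping that keeps the numerator-derivative and denominator-derivative contributions separate --- and that correctly identifies $\Cov{\cF_{n-1}}{A'_n}{\Delta S^0_n}$ with the $-\Cov{\cF_{n-1}}{\sum_{j>n}\widehat\vartheta'_j\Delta S^0_j+\sum_{j>n}\widehat\vartheta^0_j\Delta S'_j}{\Delta S^0_n}$ term --- is where care is needed to match \eqref{theta'} exactly.
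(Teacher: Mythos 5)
Your proposal is correct and follows essentially the same route as the paper's proof: backward recursion on $n$ for the strategy correction (differentiating the ratio in \eqref{sequential} via bilinearity of the conditional covariance and the quotient rule, with the same expansion of $A^\e_n$), then \eqref{Masym} by the product expansion, then the $V_0^\e$ correction by taking expectations in \eqref{FSfam} with $\E[L^\e_N]=0$, and finally the $L^\e_n$ correction via the decomposition identity and the martingale property $L^\e_n=\mathbb E_{\cF_n}[L^\e_N]$. Your explicit remark that the denominator $\Var{\cF_{n-1}}{\Delta S^\e_n}$ stays bounded away from zero near $\e=0$ is a small added justification the paper leaves implicit, but it does not change the argument.
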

\begin{proof}
We will investigate $\lim\limits_{\varepsilon\to 0} \frac{\widehat\vartheta^\varepsilon_n - \widehat\vartheta^0_n}{\e} $ first.
For $n=N$, we have
\begin{equation}\label{12255}
\begin{split}
\lim\limits_{\varepsilon\to 0} \frac{\widehat\vartheta^\varepsilon_n - \widehat\vartheta^0_n}{\e} = \lim\limits_{\varepsilon\to 0} \frac{\frac{\Cov{\cF_{N-1}}{V_N}{\Delta S_N^\e}}{\Var{\cF_{N-1}}{\Delta S_N^\e}} - \frac{\Cov{\cF_{N-1}}{V_N}{\Delta S_N^\e}}{\Var{\cF_{N-1}}{\Delta S_N^\e}}}{\e}.
\end{split}
\end{equation}
Using the definition of conditional expectation, and Theorem \ref{thmStab}\footnote{Below, we apply the assertions of Theorem \ref{thmStab} at a family of points $\e$ near the origin, this however causes no difficulty by relabeling $S^\e$'s.}, we get 
$$\lim\limits_{\e\to 0}\frac{
\Cov{\cF_{N-1}}{V_n}{\Delta S^\e_N}-
\Cov{\cF_{N-1}}{V_n}{\Delta S^0_N}
}
{\e} = \Cov{\cF_{N-1}}{V_N}{\Delta S'_N}.$$
Similarly, we deduce that
$$\lim\limits_{\e\to 0}\frac{
\Var{\cF_{N-1}}{\Delta S^\e_N} -\Var{\cF_{N-1}}{\Delta S^0_N} 
}
{\e}
= 2\Cov{\cF_{N-1}}{\Delta S^0_N}{\Delta S'_N}.$$
Therefore, in \eqref{12255}, we obtain
\begin{displaymath}
\begin{split}
&\lim\limits_{\varepsilon\to 0} \frac{\widehat\vartheta^\varepsilon_n - \widehat\vartheta^0_n}{\e}=\\ 
&\frac{
\Cov{\cF_{N-1}}{V_N}{\Delta S'_N}\Var{\cF_{N-1}}{\Delta S^0_N} -2\Cov{\cF_{N-1}}{V_N}{\Delta S^0_N}\Cov{\cF_{N-1}}{\Delta S^0_N}{\Delta S'_N}
}{\left(\Var{\cF_{N-1}}{\Delta S^0_N}\right)^2},
\end{split}
\end{displaymath}
which is precisely $\widehat\vartheta'_N$.  
If $N=1$, this completes the proof for $\lim\limits_{\varepsilon\to 0} \frac{\widehat\vartheta^\varepsilon_n - \widehat\vartheta^0_n}{\e} = \vartheta'_n$, for every $n$. If $N>1$
denoting $A^\e_{n} := V_N -  \sum\limits_{j = n + 1}^{N}\widehat \vartheta_j^\e\Delta S_j^\e
$, $n\in\{0,\dots,N-1\}$, $\e\in(-\e_0,\e_0)$, and using Theorem \ref{thmStab}, we get recursively, backward in $n$, the following chain of equalities. First, for $n=N-1$, we obtain
\begin{displaymath}
\lim\limits_{\e \to 0}\frac{A^\e_n - A^0_n}{\e} = 
- \sum\limits_{j = n + 1}^{N} \widehat\vartheta'_j\Delta S_j^0 - \sum\limits_{j = n + 1}^{N}\widehat \vartheta_j^0\Delta S'_j.
\end{displaymath}
Therefore, using Theorem \ref{thmStab} again, we obtain
\begin{displaymath}
\begin{split}
&\lim\limits_{\e \to 0}\frac{
\Cov{\cF_{n-1}}{A^\e_n }{\Delta S^\e_n}- \Cov{\cF_{n-1}}{A^0_n }{\Delta S^0_n}
}{\e} \\
&=\Cov{\cF_{n-1}}{- \sum\limits_{j = n + 1}^{N} \widehat\vartheta'_j\Delta S_j^0 - \sum\limits_{j = n + 1}^{N}\widehat \vartheta_j^0\Delta S'_j}{\Delta S^0_n} + \Cov{\cF_{n-1}}{A^0_n }{\Delta S'_n}\\
& = \Cov{\cF_{n-1}}{- \sum\limits_{j = n + 1}^{N} \widehat\vartheta'_j\Delta S_j^0 - \sum\limits_{j = n + 1}^{N}\widehat \vartheta_j^0\Delta S'_j}{\Delta S^0_n} + \Cov{\cF_{n-1}}{V_N -  \sum\limits_{j = n + 1}^{N}\widehat \vartheta_j^0\Delta S_j^0 }{\Delta S'_n},
\end{split}
\end{displaymath}
and thus, in \eqref{12255}, we conclude that
\begin{equation}\nonumber
\begin{split}
\lim\limits_{\varepsilon\to 0} \frac{\widehat\vartheta^\varepsilon_n - \widehat\vartheta^0_n}{\e} =&\frac{1}{\Var{\cF_{n-1}}{\Delta S^0_n}}\left(\Cov{\cF_{n-1}}{V_N -  \sum\limits_{j = n + 1}^{N}\widehat \vartheta_j^0\Delta S_j^0 }{\Delta S'_n} 
\right. \\
&\hspace{30mm}\left.- \Cov{\cF_{n-1}}{\sum\limits_{j = n + 1}^{N} \widehat\vartheta'_j\Delta S_j^0 + \sum\limits_{j = n + 1}^{N}\widehat \vartheta_j^0\Delta S'_j}{\Delta S^0_n}\right)\\
&\quad-\frac{2\Cov{\cF_{n-1}}{V_N -  \sum\limits_{j = n + 1}^{N}\widehat \vartheta_j^0\Delta S_j^0 }{\Delta S^0_n} \Cov{\cF_{N-1}}{\Delta S^0_n}{\Delta S'_n}}{\left(\Var{\cF_{n-1}}{\Delta S^0_n}\right)^2},
\end{split}
\end{equation}
which is exactly $\widehat\vartheta'_n$ from \eqref{theta'} for $n = N-1$. Proceeding this way, we  can establish \eqref{theta'} for every $n\geq 1$. This completes the proof of the third equality in \eqref{FSasym}. Now, \eqref{Masym}  follows from the third equality in \eqref{FSasym} and Theorem \ref{thmStab}.

To establish the first two equalities in \eqref{FSasym}, we proceed as follows. For every $\e\in(-\e_0,\e_0)$, taking the expectation in \eqref{FSfam} 
and observing that the left-hand side does not depend on $\varepsilon$ as well as that $\mathbb E\left[ L^\varepsilon_N\right] = 0$, we get
$$V_0^\varepsilon + \mathbb E\left[\sum_{j=1}^N\widehat \vartheta^\varepsilon_j \Delta S^\varepsilon_j \right]= V_0^0 + \mathbb E\left[\sum_{j=1}^N\widehat \vartheta^0_j \Delta S^0_j \right],\quad \e\in(-\e_0,\e_0).$$
Collecting the terms and dividing by $\e\neq 0$, we obtain
$$\frac{V_0^\varepsilon - V_0^0}{\e} = 
\frac{
 \mathbb E\left[\sum_{j=1}^N\widehat \vartheta^0_j \Delta S^0_j -\sum_{j=1}^N\widehat \vartheta^\varepsilon_j \Delta S^\varepsilon_j \right]
 }
 {\e},\quad \e\in(-\e_0,0)\cup(0,\e_0).
 $$
 Taking the limit as $\e\to 0$, and using \eqref{Masym}, we conclude that 
 $$\lim\limits_{\e\to 0}\frac{V_0^\varepsilon - V_0^0}{\e} = -\mathbb E\left[ \sum_{j=1}^N\widehat \vartheta^0_j \Delta S'_j +\sum_{j=1}^N\widehat \vartheta'_j \Delta S^0_j\right],$$
 which is precisely the first equality in \eqref{FSasym}.
 
 To obtain the remaining assertion in \eqref{FSasym}, we observe that from \eqref{FSfam}, \eqref{Masym} and the other two assertions in \eqref{FSasym}, we immediately obtain 
\begin{displaymath}\begin{split}\lim\limits_{\e\to 0}\frac{L^\e_N - L^0_N}{\e} &= -\lim\limits_{\e\to 0}\frac{V^\e_0 - V^0_0}{\e} - \lim\limits_{\e\to 0}\frac{\sum_{j=1}^N\widehat \vartheta^\varepsilon_j \Delta S^\varepsilon_j - \sum_{j=1}^N\widehat \vartheta^0_j \Delta S^0_j}{\e} \\ 
&= \mathbb E\left[ \sum_{j=1}^N\widehat \vartheta^0_j \Delta S'_j +\sum_{j=1}^N\widehat \vartheta'_j \Delta S^0_j\right] - \sum_{j=1}^N\widehat \vartheta^0_j \Delta S'_j -\sum_{j=1}^N\widehat \vartheta'_j \Delta S^0_j\\ 
&= -\left(\sum_{j=1}^N\widehat \vartheta^0_j \Delta S'_j -\mathbb E\left[ \sum_{j=1}^N\widehat \vartheta^0_j \Delta S'_j \right]\right) 
-\left(\sum_{j=1}^N\widehat \vartheta'_j \Delta S^0_j - \mathbb E\left[ \sum_{j=1}^N\widehat \vartheta'_j \Delta S^0_j\right]\right).
\end{split}
\end{displaymath}
If $N=1$, this completes the proof. 
If $N>1$, for $n<N$, we have $\mathbb E_{\cF_{n}}\left[ L^\e_N\right] = L^\e_n$, $\e\in(-\e_0,\e_0).$ Therefore, using \eqref{FSfam} and taking the conditional expectation, we obtain
$$\frac{L^\e_n - L^0_n}{\e} = \frac{\mathbb E_{\cF_{n}}\left[ L^\e_N - L^0_N\right] }{\e}
=-\frac{V^\e_0 - V^0_0}{\e} - \frac{\mathbb E_{\cF_{n}}\left[
\sum\limits_{j=1}^N\widehat\vartheta^\e_j\Delta S^\e_j-  \sum\limits_{j=1}^N\widehat\vartheta^0_j\Delta S^0_j \right]}{\e}
.$$
Taking the limit, and using \eqref{Masym} and the first equality in \eqref{FSasym}, we conclude that 
\begin{displaymath}
\lim\limits_{\e\to 0}\frac{L^\e_n - L^0_n}{\e}  = 
-\mathbb E_{\cF_{n}}\left[\sum\limits_{j=1}^N \widehat\vartheta'_j\Delta S^0_j - \mathbb E\left[\sum\limits_{j=1}^N \widehat\vartheta'_j\Delta S^0_j\right]\right]
-\mathbb E_{\cF_{n}}\left[\sum\limits_{j=1}^N \widehat\vartheta^0_j\Delta S'_j - \mathbb E\left[\sum\limits_{j=1}^N \widehat\vartheta^0_j\Delta S'_j\right]\right].
\end{displaymath}
This completes the proof of the theorem.
\end{proof}

\bibliographystyle{alpha}
\bibliography{referencesDec2019}
\end{document}